\documentclass[12pt,reqno]{amsart}

\usepackage{amsthm}
\usepackage{amsmath}
\usepackage{latexsym}
\usepackage{amsfonts}
\usepackage{amssymb}
\usepackage{color}
\usepackage{bbm,dsfont}
\usepackage{graphicx}

\usepackage{subcaption}
\usepackage{mathrsfs}
\usepackage{mathbbol}
\usepackage{enumerate}
\usepackage{MnSymbol}
\usepackage{bbding}
\usepackage{multirow}
\usepackage[percent]{overpic}
\usepackage{etoolbox}

\usepackage{array}
\usepackage{makecell}
\usepackage[table]{xcolor}
\usepackage{appendix}
\usepackage{subcaption}
\usepackage[a4paper]{geometry}

\newtheorem{proposition}{Proposition}
\newtheorem{theorem}{Theorem}

\theoremstyle{definition}


\newcommand{\ket}[1]{| #1 \rangle}

\newcommand{\ketbra}[1]{| #1 \rangle \langle #1|}

\newcommand{\bea}{\begin{eqnarray}}
\newcommand{\eea}{\end{eqnarray}}






\newcommand{\no}[1]{\left\|#1\right\|} 
\newcommand{\tr}[1]{\mathrm{tr}\left[#1\right]} 
\newcommand{\id}{\mathbbm{1}} 

\newcommand{\parti}{\mathcal{P}} 
\newcommand{\M}{\mathsf{M}}
\newcommand{\N}{\mathsf{N}}
\newcommand{\C}{\mathsf{C}}
\newcommand{\mstd}{\mathsf{M}_{\textrm{std}}}

\newcommand{\cpost}{C_{\mathrm{post}}}
\newcommand{\cpre}{C_{\mathrm{pre}}}
\newcommand{\cmeta}{C_{\mathrm{meta}}}

\newcommand{\rpost}{R_{\mathrm{post}}}
\newcommand{\rpre}{R_{\mathrm{pre}}}
\newcommand{\rmeta}{R_{\mathrm{meta}}}


\newcommand{\ppre}{P_{\mathrm{pre}}}
\newcommand{\ppost}{P_{\mathrm{post}}}
\newcommand{\pmeta}{P_{\mathrm{meta}}}

\newcommand{\qxl}{q_{\ell}}
\newcommand{\rxl}{\varrho_{\ell}}

\makeatletter
\patchcmd{\@maketitle}
  {\ifx\@empty\@dedicatory}
  {\ifx\@empty\@dedicatory \par\vskip0.25em{\begin{center}\scriptsize\@setaddresses\end{center}}}
  {}{}
\makeatother

\title{Metainformation in quantum guessing games}
\author{Teiko Heinosaari \and Hanwool Lee}
\address{Faculty of Information Technology, University of Jyv\"askyl\"a, Finland}


\begin{document}

\maketitle
\makeatletter \let\@setaddresses\relax \makeatother

\begin{abstract}

Quantum guessing games offer a structured approach to analyzing quantum information processing, where information is encoded in quantum states and extracted through measurement. An additional aspect of this framework is the influence of partial knowledge about the input on the optimal measurement strategies.
This kind of side information can significantly influence the guessing strategy and earlier work has shown that the timing of such side information, whether revealed before or after the measurement, can affect the success probabilities.
In this work, we go beyond this established distinction by introducing the concept of metainformation. 
Metainformation is information about information, and in our context it is knowledge that additional side information of certain type will become later available, even if it is not yet provided. We show that this seemingly subtle difference between having no expectation of further information versus knowing it will arrive can have operational consequences for the guessing task. Our results demonstrate that metainformation can, in certain scenarios, enhance the achievable success probability up to the point that post-measurement side information becomes as useful as prior-measurement side information, while in others it offers no benefit. 
By formally distinguishing metainformation from actual side information, we uncover a finer structure in the interplay between timing, information, and strategy, offering new insights into the capabilities of quantum systems in information processing tasks.
\end{abstract}

\section{Introduction}

At its core, both classical and quantum information processing aim to produce a specific, desired output from a given input. This objective can be abstracted as a kind of game, where success is quantified by a scoring rule: correct outcomes receive high scores, while incorrect ones receive none. Framing information processing in this way—as a structured guessing challenge—provides a unifying perspective for diverse tasks, making it a powerful tool for analyzing the potential benefits that quantum systems offer in processing information.
Depending on the context, the guessing game may represent a communication task: for example, Alice may attempt to convey information to Bob, potentially while concealing it from others. Alternatively, it can describe a computational setting, where Alice selects an input string and performs a computation—possibly with Alice and Bob being the same agent.

Our focus is on quantum guessing games, a setting in which the information to be guessed is encoded into quantum states and later extracted through quantum measurements. An essential aspect in the current investigation is the presence of classical side information, which is an additional data that provides partial knowledge about the original input. 
This side information does not fully determine the input but can significantly influence the guessing strategy.
A central question in this context is when the side information is made available. Prior studies \cite{BaWeWi08,GoWe10,CaHeTo18,heinosaari2023anticipative} have demonstrated that the timing of this information—whether it is revealed before or after the quantum measurement— can make a crucial difference. This temporal distinction is not merely a technical detail; it can fundamentally alter the structure of optimal strategies and the achievable success probabilities \cite{carmeli2022quantum}.
One practical application of this sensitivity to timing is in the task of incompatibility witnessing \cite{CaHeTo19,CaHeMiTo19JMP,UoKrShYuGu19,SkSuCa19}.

In this work, we uncover a finer structure in the role of side information. Specifically, even when the side information is revealed after the measurement, there are two fundamentally different scenarios. In one case, the receiver is aware in advance that certain kind of additional information will be provided later. In the other, the receiver has no expectation of any further information. This distinction turns out to be operationally significant, as it can affect the strategies used and the achievable performance in the guessing task.
We demonstrate that the difference in these two scenarios can sometimes affect the optimal guessing probabilities, but not always.

Knowledge that additional information will later become available is not useful in isolation unless that side information is actually provided. We refer to this as metainformation, as it constitutes information about information. Distinguishing metainformation from actual side information allows us to clarify important distinctions in different operational scenarios.

It should be noted that the role of timing of side information in guessing games becomes crucial only when quantum systems are used as information carriers, as in the classical setting side information revealed before or after the measurement are equally good.
In fact, in classical setting the optimal measurement is always the measurement that perfectly distinguishes the pure dit states, hence there is no advantage that can be gained by changing the measurement.  

Our investigation is structured as follows. 
In Section \ref{sec:setup} we give a precise definition of metainformation and explain its difference to side information. A systematic way to calculate minimum error in different settings is presented in Section \ref{sec:min}. An alternative quantification of performance in guessing games is maximum confidence and in Section \ref{sec:max} we provide tools to tackle the questions with that measure of success.
Interestingly, post-measurement side information can always reach the same confidence as pre-measurement side information if metainformation is provided.
Finally, in Section \ref{sec:summary} we summarize our findings.

\section{Side information and metainformation in guessing games}\label{sec:setup}

We are considering a class of guessing games where the sender, Alice, chooses a symbol $x$ from a finite label set $X$, prepares a quantum state $\varrho_x$ based on that choice and then transmits the quantum system to the receiver, Bob. 
We assume that Bob knows the encoding of symbols into states.
Bob is extracting information by performing a measurement on the received system.
His task is to make some decision that is based on $x$.
One obvious game is that Bob needs to guess the symbol $x$ that Alice chose. 
This task is referred as quantum state discrimination.
It can also be a different kind of game, where Bob e.g. needs to guess a symbol different from Alice choice.
That kind of task is referred as antidiscrimination, or exclusion.

In the current work, we investigate the case that Bob gets additional information on the symbol $x$.
We assume that this \emph{side information} is classical information and it conveys some partial knowledge about $x$.
A paradigmatic example of side information is that Bob is told that $x$ belongs to some subset $X' \subset X$.
Depending on which phase Bob gets the side information, it is either 1) \emph{pre-measurement information}, or 2) \emph{post-measurement information}.
Namely, the side information has to be given to Bob either before or after he has to perform the measurement, and that divides the side information into the mentioned two classes.

It is clear that pre-measurement information is at least as good aid for Bob as post-measurement information.
The reason is that Bob can choose to use the obtained side information later, which degrades pre-measurement side information effectively into post-measurement side information.
Further, it should not come as a surprise that in some cases pre-measurement side information is strictly more useful than post-measurement side information as the first one can be used in the choice of the measurement, while the latter one can be only used in the post-processing of the obtained measurement outcome.

The main focus of our current investigation is to recognize the role of an additional layer of information in the previously described setup.
Namely, in the setting of post-measurement information there are two distinct alternatives.
In the first case, Bob does not know that he will receive side information until he gets it.
Or to be more precise, he does not know what form of side information he will get.
After Bob has received side information, he uses it in the best possible way to carry out the task, which can be e.g. state discrimination. 
For instance, if Bob records a measurement outcome $y$ but post-measurement information rules out that specific symbol as a correct answer, then Bob is going to guess some other symbol.
In the second case, Bob knows that he will get post-measurement side information and he also knows the form of that information. 
In this latter case Bob has information about the side information, and therefore that is referred as \emph{metainformation}.

To give an example, assume that the message is chosen from a label set $X$, which is divided into a partition, $X= X_1 \cup X_2$. 
The side information is the correct part of the partition, i.e., the index $i$ for which $x\in X_i$.
As explained before, the side information can be transmitted to Bob before or after he is performing the measurement.
The metainformation in this case is the partition itself.
In practice, this means that when metainformation is transmitted to Bob, he is told that the subsets forming the partition are $X_1$ and $X_2$ but he is  not told to which subset the symbol $x$ belongs to.

There are several general properties that are characteristic of metainformation.
First, while side information typically increases the probability of guessing the correct index even if the measurement would fail to give any information, metainformation alone is useless.
Metainformation becomes useful only together with post-measurement side information.
A more technical way to say this is that Bob has in the beginning a prior distribution of the correct index $x$, which would be the uniform distribution on $X$ if there is no additional knowledge.
The side information changes this prior distribution, while
metainformation alone does not change it.
Second, in the case of pre-measurement side information there is no additional benefit of getting metainformation.
The reason is that since metainformation is information about side information, the latter contains the first one.
Metainformation has value only if the side information is obtained at a later stage, and the only relevant time step in our considered scheme is the measurement process.

\begin{figure}[t]
    \centering

    \begin{subfigure}[b]{0.48\textwidth}
        \centering
        \fbox{\includegraphics[width=\linewidth]{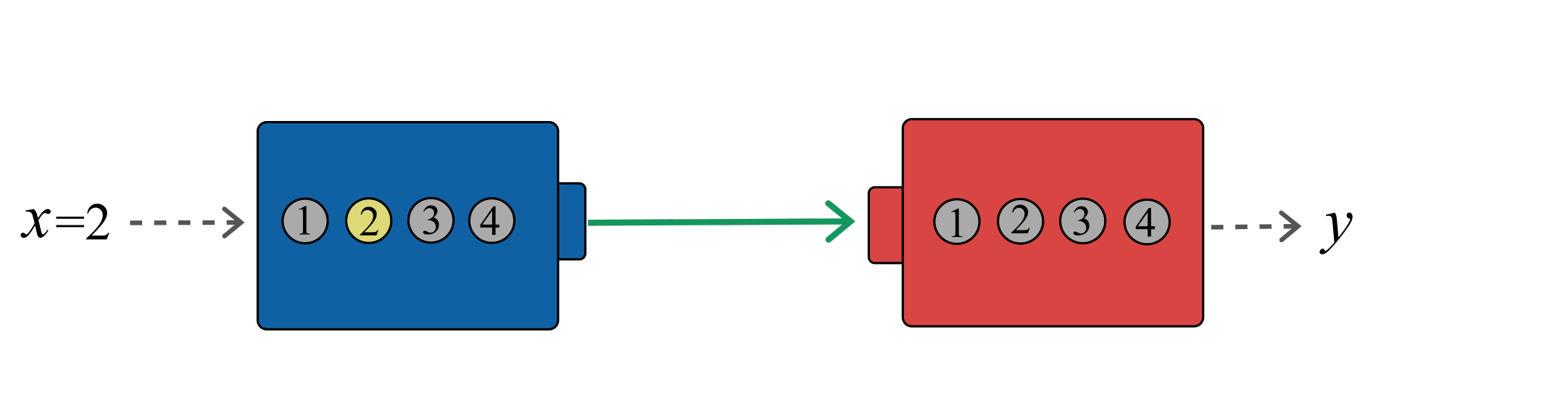}} 
        \caption*{(a) No side information.}
    \end{subfigure}
    \hfill
    \begin{subfigure}[b]{0.48\textwidth}
        \centering
        \fbox{\includegraphics[width=\linewidth]{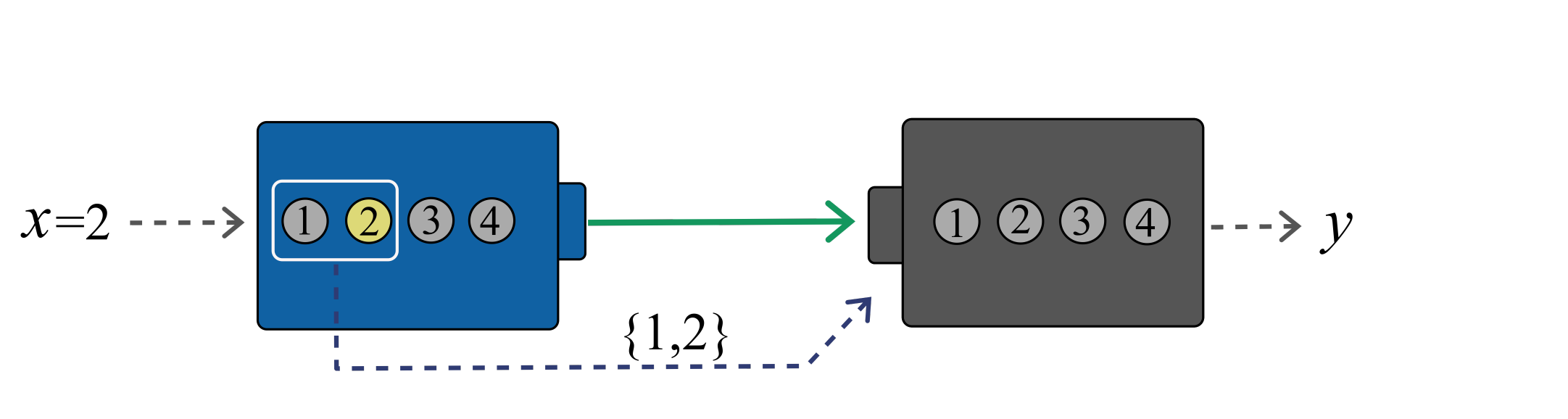}} 
        \caption*{(b) Pre-measurement side information.}
    \end{subfigure}

    \vspace{1em}

    \begin{subfigure}[b]{0.48\textwidth}
        \centering
        \fbox{\includegraphics[width=\linewidth]{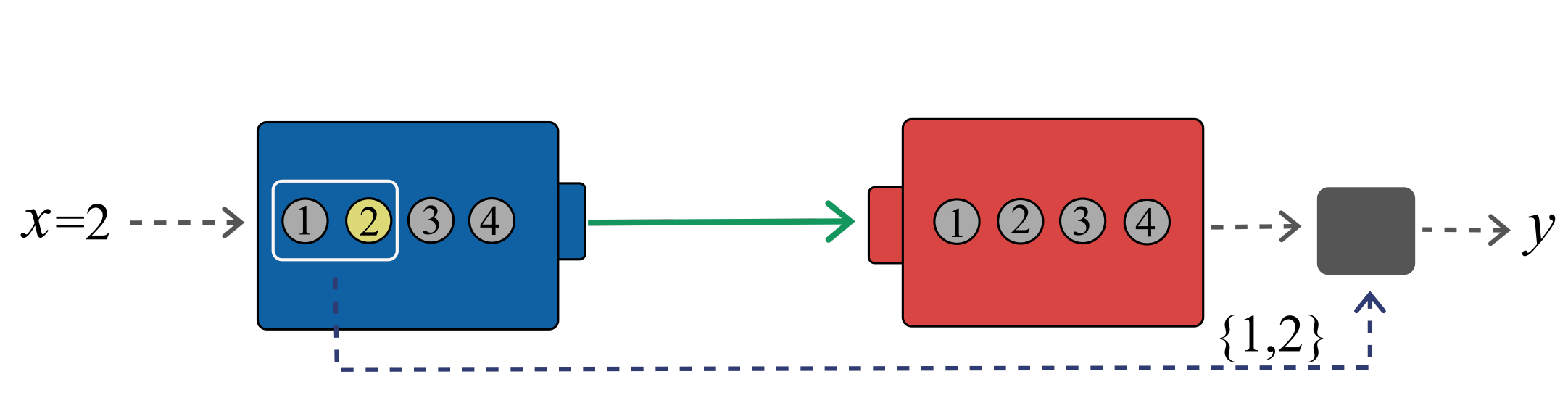}} 
        \caption*{(c) Post-measurement side information.}
    \end{subfigure}
    \hfill
    \begin{subfigure}[b]{0.48\textwidth}
        \centering
        \fbox{\includegraphics[width=\linewidth]{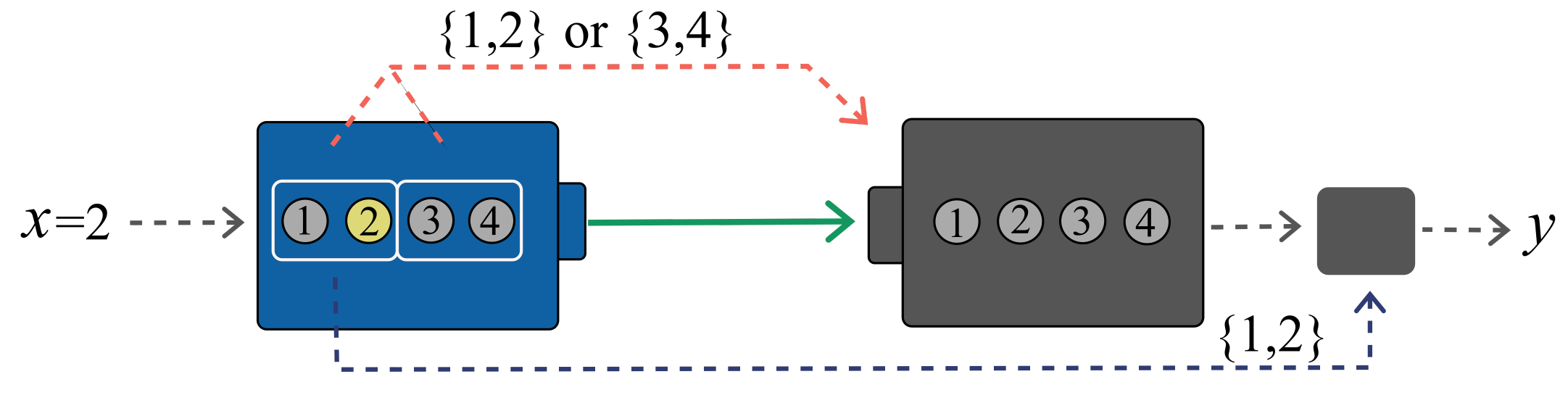}} 
        \caption*{(d) Post with meta information.}
    \end{subfigure}

    \caption{Four different scenarios that differ in the ways how classical information complements the transmission of quantum information. The components that can be optimized according to the classical information are gray.}
    \label{fig:setup}
\end{figure}

From the previous discussion we conclude that there are three different scenarios when considering a guessing game with side information:
\begin{itemize}
\item pre-measurement side information
\item metainformation followed by post-measurement side information
\item post-measurement side information alone. 
\end{itemize}
The scenario that is also illustrative to have in our comparison is:
\begin{itemize}
\item no side information.
\end{itemize}
These four scenarios are depicted in Fig. \ref{fig:setup}.

Typically, we use the success probability $P$ of guessing the correct message as the figure of merit,  which is closely related to min-entropy \cite{konig2009operational}, and generally
\begin{equation}
\ppre \geq  \pmeta \geq \ppost \geq P \, ,
\end{equation}
where the subscripts indicate the form of side information. (Note that 'meta' means that both metainformation and post-measurement side information are given.)

Whether some of the above inequalities is equality or strict inequality depends on the form of side information. 
In the extreme case we can have $\ppre =P$, which means that side information is useless in its all types of delivery.
As we will next demonstrate, in typical cases side information is useful and then the question is how much the mentioned scenarios differ.
Depending on the guessing task and side information, it can happen, for instance, that $\ppre =  \pmeta$ or even $\ppre = \ppost$. 

Let us make the previous discussion concrete by considering a qubit system as the information carrier and certain types of side information.
The task of Bob is to guess the index that Alice has chosen.
The label set is $X=\{1+,1-,2+,2-\}$, whose symbols are chosen uniformly, and the information is encoded into four qubit states that form two orthonormal bases, separated by an angle $0<\theta < \frac{\pi}{2}$. 
Hence, the encoding is
\begin{equation}
    \rho_{b,p}=\frac{1}{2} \left( \id+p((-1)^{b-1} \sin(\tfrac{\theta}{2})\sigma_x+\cos(\tfrac{\theta}{2})\sigma_z) \right), 
\end{equation}
where $b\in \{ 1,2\}$ is called the \emph{value of the basis} and $p\in\{+,-\}$ is called the \emph{value of the parity}. 
The label set $X$ can be partitioned by basis, $X=X_1 \cup X_2$, or  by parity, $X=X_+ \cup X_-$, where $X_b=\{ b+,b- \}$ and $X_p=\{ 1p,2p \}$.
The classical side information is the value $i$ of the partition $X_i$, and the metainformation is the information about the chosen partitioning; see Fig. \ref{fig:2bases}.

\begin{figure}[t]
  \centering
  \begin{subfigure}[b]{0.45\textwidth}
    \centering
    \includegraphics[width=\linewidth]{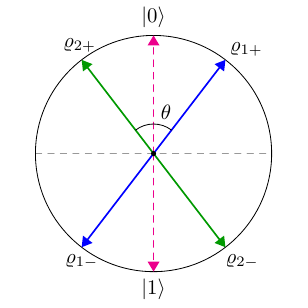}
    \caption*{(a) Basis side information.}
  \end{subfigure}
  \hspace{0.04\textwidth}
  \begin{subfigure}[b]{0.45\textwidth}
    \centering
    \includegraphics[width=\linewidth]{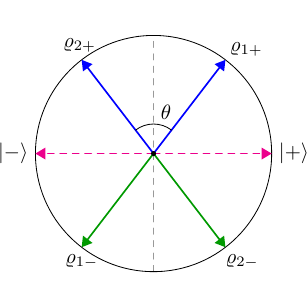}
    \caption*{(b) Parity side information.}
  \end{subfigure}
  \caption{Four symbols are encoded in qubit states that form two bases. In these two cases the discrimination tasks are the same, but they differ in the form of side information. The partitions that define the form of side information are marked by the color (green or blue). The measurements that are optimal for post-measurement side information with metainformation are represented by purple dashed arrows.}
  \label{fig:2bases}
\end{figure}

In order to find the optimal way to read the encoded quantum information, we denote $\M(x)$ the effect that concludes the symbol $x$ by a measurement $\M$. 
With no side information, the optimal measurement is any convex combination of projective measurements onto the bases, 
\begin{equation}\label{eq:measurement-without}
\M(b,p)=t_b \rho_{b,p} 
\end{equation}
where $t_1,t_2 \geq 0$ and $t_1+t_2=1$. Any of these measurements yield the success probabilty of $P=\frac{1}{2}$ \cite{schumacher2010quantum}. 

We will later present a systematic way to calculate the best use of side information with and without metainformation, but here we describe the optimal solutions in the previous two cases to demonstrate the role that metainformation has. 
The details of the calculations are explained in Sec. \ref{sec:min}. 
Let us first consider the case where $X$ is partitioned into $X=X_1 \cup X_2$ according to the value of the basis.  
If the value of the basis is known to Bob prior to measurement, then the success probability is $\ppre=1$ since Bob will discriminate orthogonal states. If Bob knows that he will get the information about the basis after the measurement, then the optimal measurement is the projective measurement related to $\sigma_z$, and the success probability is $\pmeta=\frac{1}{2}(1+\sqrt{\frac{1+|\cos\theta|}{2}} )$. 
Finally, if Bob does not know that he will receive post-measurement side information, then he performs the measurement \eqref{eq:measurement-without} and later he will post-process the measurement outcome with the classical information. 
In this case, the best success probability is $\ppost=\frac{1}{4}(3+|\cos\theta|)$. 
Therefore,  
\begin{equation}\label{eq:strict}
\ppre > \pmeta>\ppost>P
\end{equation}
and we conclude that both the timing of side information and the related metainformation are relevant.
Their effectiveness varies with $\theta$, see Fig. \ref{fig:plots} (a).

The strict ordering \eqref{eq:strict} may seem so natural that one wonders if it is always the case.
Let us consider the other mentioned partition, $X=X_+\cup X_-$, in which the side information is the value of the parity. 
When the value of parity is known before making a measurement, the optimal measurement is the projective measurement related to $\sigma_x$.

The success probability is $\ppre=\frac{1}{2}(1+\sin\frac{\theta}{2})$. 
It turns out that the same measurement is optimal also when metainformation and post-measurement information are given. 
Further, $\pmeta=\frac{1}{2}(1+\sin\frac{\theta}{2})$, which is the same as $P_{pre}$. Lastly, without metainformation, the success probability with post-measurement information is $\ppost=\frac{1}{4}(3-\cos\theta)$. 
Therefore, in this case we have 
\begin{equation}
\ppre = \pmeta>\ppost>P 
\end{equation}
and the success probabilities as functions of $\theta$ are depicted in Fig. \ref{fig:plots} (b).
We conclude that in this case pre-measurement information has no advantage over post-measurement information when the latter comes with metainformation.

\begin{figure}[t]
  \centering
  \begin{subfigure}[b]{0.47\textwidth}
    \centering
    \includegraphics[width=\linewidth]{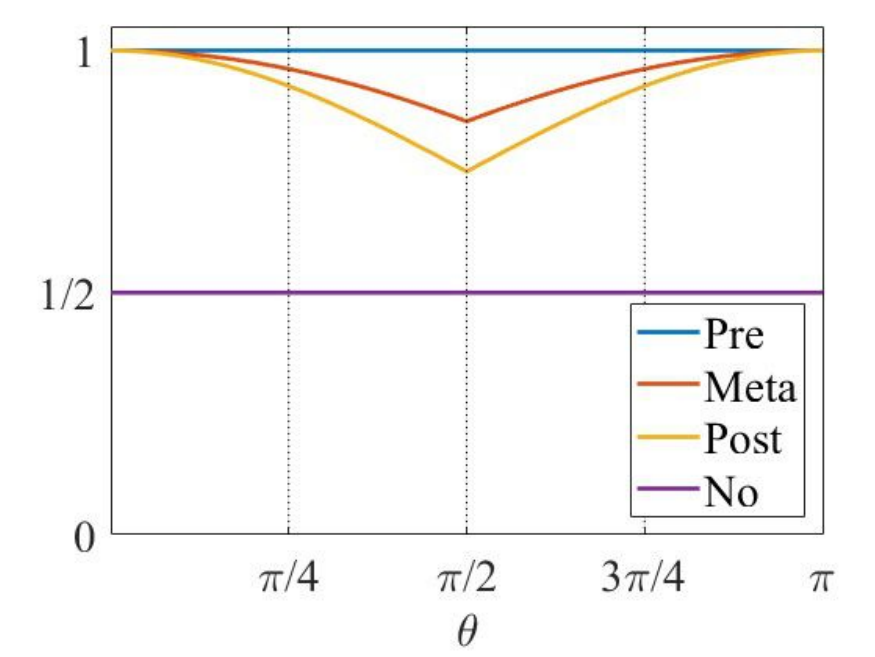}
    \caption*{(a) Basis side information.}
  \end{subfigure}
  \hspace{0.04\textwidth}
  \begin{subfigure}[b]{0.47\textwidth}
    \centering
    \includegraphics[width=\linewidth]{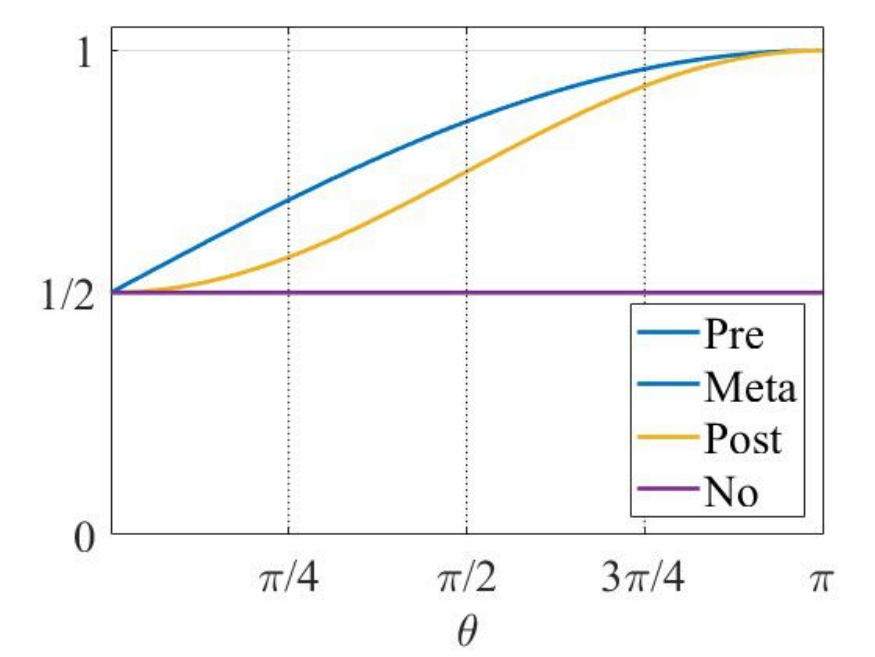}
    \caption*{(b) Parity side information.}
  \end{subfigure}
  \caption{The success probabilities of different side information scenarios in Fig. \ref{fig:2bases} are plotted as functions of $\theta$. In (a), the side information is the value of the basis. For this encoding, there is a hierarchy $\ppre>\pmeta>\ppost>P$ for all $0<\theta<\pi $. In (b), the side information is the value of parity. In this case, $\ppre=\pmeta$  for all
  $0<\theta<\pi$.}
  \label{fig:plots}
\end{figure}

\section{Minimum-error discrimination tasks}\label{sec:min}

In this section we consider minimum error discrimination with side information.
As before, we denote by $X$ the set of symbols. 
A symbol $x$ is chosen from $X$ according to the prior probability $q_x$, then the corresponding quantum state $\varrho_x$ is prepared and sent to the receiver.
The side information is related to a partition $X=\cup_\ell X_\ell$ of $X$ into $m$ nonempty disjoint subsets.
We denote by $\parti$ the partition.
The side information, either delivered before or after the measurement, is the subset $X_\ell$ for which $x \in X_\ell$.
The corresponding metainformation is the partition $\parti$ itself.

\subsection{Pre-measurement information} 

In the case when side information is delivered before the measurement, the receiver knows that the sent symbol belongs to the subset $X_\ell$. 
Therefore, the task is to discriminate states from the subset $\{ \varrho_x : x \in X_\ell \}$ and the measurement can hence depend on the index $\ell$ that specifies the subset.
To find the optimal measurement for a given $\ell$ reduces to the usual state discrimination task.
One has to take into account that the a priori probabilities $q_x$ are now updated into $q'_x := q_x/(\sum_{y \in X_l} q_y)$.
The overall success probability $\ppre$ is the weighted average of the success probabilities of all $\ell$-dependent success probabilities.

To be specific, suppose a symbol $x \in X_\ell$ is chosen. 
The subset $X_\ell$ will be known to Bob before a measurement, and he will make the optimal measurement to discriminate all symbols in $X_\ell$, which can be found by the following optimization
\bea
P_\ell=\max_\M \sum_{x \in X_\ell} q_x\tr{\varrho_x \M(x)} \, .
\eea
Here $P_\ell$ is the success probability of discriminating the symbols in the subset $X_\ell$. 
The overall success probability is then
\bea
\ppre=\sum_\ell q_\ell P_\ell \, ,
\eea
where $q_\ell \equiv \sum_{x \in X_\ell} q_x$.

\subsection{Post-measurement information without metainformation}

In the case of post-measurement information without metainformation, Bob is not aware that he is going to receive side information and therefore he is performing the measurement that is optimal for the standard discrimination task. We call this measurement as the standard measurement and denote it by $\mstd$. The standard measurement is found by solving the optimization problem
\bea
\sum_{ x \in X} q_x\tr{\varrho_x \mstd(x)} = \max_{\M} \sum_{ x \in X} q_x\tr{\varrho_x \M(x)} \, .
\eea

When receiving side information after measurement, Bob can relabel the obtained measurement outcome and the relabeling can depend on the side information $X_\ell$.
We denote by $f_\ell$ a relabeling map that is applied in that case.
Hence, if the obtained  measurement outcome is $y$ and the side information is $X_\ell$, Bob will guess $f_\ell(y)$.
The success probability is then
\begin{align} 
\ppost &=\sum_\ell \sum_{x \in X_\ell} \sum_{y \in Y} q_x\tr{\varrho_x  \mstd(y)} \delta_{x,f_{\ell}(y)} \nonumber \\
& = \sum_\ell \sum_y q_{f_\ell (y)} \tr{ \varrho_{f_\ell (y)}\mstd(y) } \, .
\end{align}

From this expression we can find out what relabelling maps Bob should use.
For fixed $\ell$ and $y$, we choose $x=f_\ell(y)$ such that the expression
$q_x \tr{\rho_x \mstd(y)}$ takes the maximal value.
Therefore, the optimal relabeling map $f_\ell$ is defined as
\bea
f_\ell(y)=\arg \max_{x\in X_\ell} q_x\tr{\varrho_x \mstd(y)} \, .
\eea
In the case of multiple maximal values the relabeling map is not unique.

\subsection{Post-measurement information with metainformation}\label{sec:meta}

If metainformation is provided, the receiver knows the partition $X=\cup_\ell X_\ell$ before he performs the measurement.
A discrimination strategy then consists of a measurement $\M$ with an outcome set $Y$ and relabeling maps $f_\ell:Y \to X_\ell$ for each $\ell=1,\ldots,m$.
The expression for success probability is
\begin{align}
\pmeta 
& = \max_{\M} \max_{f_\ell} \sum_\ell \sum_{x\in X_\ell}\sum_y q_x \tr{\varrho_x\M(y) } \delta_{x,f_\ell(y)} \nonumber \\
& = \max_{\M} \max_{f_\ell}\sum_\ell \sum_y q_{f_\ell (y)} \tr{ \varrho_{f_\ell (y)}\M(y) } \, ,
\end{align}
where the maximization is taken over all possible measurements and relabelling maps. 
The difference to the previous case is that $\M$ can be something else than the standard measurement since the choice of $\M$ can depend on the metainformation.
The outcome set of $\M$ is, in principle, arbitrary. 
However, it has been shown in \cite{CaHeTo18} that we can restrict to the measurements that have the Cartesian product $X_1 \times \cdots \times X_m$ as their outcome space and to the relabeling functions that are the projections of $X_1 \times \cdots \times X_m$ onto $X_\ell$.
More precisely, for any choice of $\M$ and $\{f_\ell\}$, there is a measurement $\C$ on $X_1 \times \cdots \times X_m$ that, with the projection relabeling functions, gives the same success probability.

Intuitively, the receiver chooses a measurement that gives an outcome for any possible value of side information.
Then, when receiving the post-measurement side information, he picks the respective outcome.

By using $\C$ with the above properties, the success probability takes the form
\begin{align}
\pmeta = \max_{\C} \sum_{x_1,\ldots,x_m} \sum_{\ell} q_{x_\ell}\tr{  \varrho_{x_\ell} \C(x_1,\ldots,x_m)} \, ,
\end{align}
where $x_1 \in X_1$, $x_2 \in X_2$, and so forth.
This formula leads to the second useful observation \cite{CaHeTo18}. 
We define \emph{auxiliary states} as
\begin{align}\label{eq:aux}
\widetilde{\varrho}_{x_1,\ldots,x_m} = \frac{1}{\Delta_{x_1,\ldots,x_m}} \sum_{\ell} q_{x_\ell} \varrho_{x_\ell} \, ,
\end{align}
where the normalization factor is $\Delta_{x_1,\ldots,x_m}=\sum_{\ell} q_{x_\ell}$.
We have
\begin{equation}
\sum_{x_1,\ldots,x_m}  \Delta_{x_1,\ldots,x_m} =\sum_{x_1,\ldots,x_m}\sum_{\ell} q_{x_\ell} = |X_1| \cdots |X_m| \sum_\ell \frac{q_l}{|X_l|} \equiv  \Delta \, .
\end{equation}
Hence, $q_{x_1,\ldots,x_m} \equiv \Delta_{x_1,\ldots,x_m} / \Delta$ is a probability distribution.
Then, $\pmeta$ can be written as
\begin{align}
\pmeta = \Delta \cdot \max_\C \sum_{x_1,\ldots,x_m} q_{x_1,\ldots,x_m} \tr{  \widetilde{\varrho}_{x_1,\ldots,x_m} \C(x_1,\ldots,x_m)} \, .
\end{align}
The expression in the right hand side, in maximization, is the usual minimum error discrimination task of the auxiliary states.
In this way, the problem reduces to solving minimum error discrimination of auxiliary states, with the corresponding apriori distribution.
Following \cite{heinosaari2023anticipative}, we call the optimal measurement for the discrimination task with metainformation followed by the post-measurement side information as the \emph{anticipative measurement}.

\subsection{Basis encoding scheme}

We will consider a class of encoding schemes onto qubits, called basis encoding, and investigate when metainformation with post-measurement information gives advantage over just post-measurement information alone. 
In this scheme we encode an even number of symbols into qubit states that form several orthonormal bases.
The label set consists of pairs $1+$ and $1-$, $2+$ and $2-$, and so on.
Suppose there are $k$ such pairs and hence $2k$ symbols in total.

To encode the symbols into qubit states, we fix $k$ orthonormal bases. 
For each basis, which we denote as $\mathcal{B}_i$, we label the vectors as $+$ (positive direction) and  $-$ (negative direction). 

We denote $\theta_{ij}$ as the angle between the positive directions of $\mathcal{B}_i$ and  $\mathcal{B}_j$ in the Bloch ball; see Fig. \ref{fig:bases-encoding}. 

\begin{figure}
    \centering
    \includegraphics[width=0.5\linewidth]{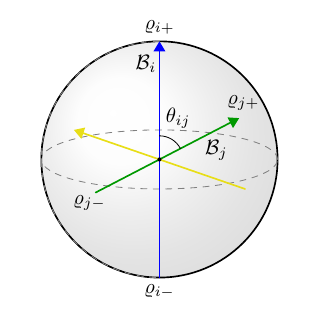}
    \caption{In the basis encoding scheme information is encoded into $k$ qubit bases. These can be depicted as central axes in the Bloch sphere. A label $i\pm$ is encoded to the quantum state $\varrho_{i\pm}$. The states $\varrho_{i+}$ and $\varrho_{i-}$ form an orthonormal basis.}
    \label{fig:bases-encoding}
\end{figure}

Lastly, we denote $\vec{n}_i$ the Bloch vector in the positive direction of $\mathcal{B}_i$. 
In summary, the label set is $X=\{i\pm\}_{i=1}^k$ and a label $i\pm$ is encoded to the quantum state
\begin{equation}
    \varrho_{i\pm}=\frac{1}{2}( \id \pm \vec{n}_i \cdot \vec{\sigma}) \, .
\end{equation}
The efficiency of this encoding depends on the angles $\theta_{ij}$ and we will soon find out the details.

The optimal measurement without classical side information is given as 
\begin{equation}
    \mstd(i\pm)=t_i \varrho_{i\pm} 
\end{equation}
where the coefficients $t_i \geq 0$ satisfy $\sum_{i=1}^k t_i =1$, but can be otherwise chosen freely \cite{schumacher2010quantum}.
Any of these measurements yield the success probability $P=\frac{1}{k}$. This means that any probabilistic choice of projective measurements onto the states $\{\varrho_{i+},\varrho_{i-} \}$ is optimal strategy. 
In the following, however, we will show that not all choices of projective measurements are the optimal when post-processing with classical side information is considered. 
Namely, only certain projective measurements will give the optimal success probability with post-processing.

We consider two different kinds of side information, i.e., ways of partitioning the label set. 
The first case is when the side information is the value $i$ of the basis $\mathcal{B}_i$, and the partition is $X=X_1\cup X_2\cup \ldots \cup X_k$ where $X_i=\{i+, i-\}$. 
In order to calculate the success probability when post-measurement side information comes with metainformation, we form the auxiliary states as explained in Subsec. \ref{sec:meta}.
Let us denote $\widetilde{\rho}_{\vec{p}}$ the auxiliary state, where $\vec{p}=(p_1,\ldots , p_k)$ and $p_i = \pm$. 
There are in total $2^k$ auxiliary states and using \eqref{eq:aux} they turn out to be
\bea\label{eq:aux-p}
\widetilde{\varrho}_{\vec{p}}=\frac{1}{k}\sum_{i=1}^k \varrho_{i,p_i}=\frac{1}{2}( \id+\frac{1}{k}\sum_{i=1}^k p_i \vec{n}_i \cdot \vec{\sigma}) \, .
\eea
The corresponding apriori probability distribution $q_{\vec{p}}$ is uniform and $\Delta=2^{k-1}$.

The second case is when the side information is the value of parity, and the partition is $X=X_{+}\cup X_-$ where $X_+=\{i+\}_{i=1}^k$ and $X_-=\{i-\}_{i=1}^k$.
There are now $k^2$ auxiliary states and they are labeled with two indices, $ij$,  where $i,j\in \{1,\ldots,k\}$.
The auxiliary states have the form
\bea\label{eq:aux-ij}
\widetilde{\varrho}_{ij}=\frac{1}{2}(\varrho_{i+}+\varrho_{j-})=\frac{1}{2}(\id+\frac{1}{2}(\vec{n}_i - \vec{n}_j)\cdot \vec{\sigma}) \, .
\eea
The corresponding apriori probability distribution $q_{ij}$ is uniform and $\Delta=k$.

In both cases above, the set of auxiliary states has an important property: for each state, there exists a state opposite to each other and having the same spectrum.
Namely, $\widetilde{\varrho}_{\vec{p}}$ and $\widetilde{\varrho}_{-\vec{p}}$ have the Bloch vectors which are opposite to each other and have the same length. 
Likewise, $\widetilde{\varrho}_{ij}$ and $\widetilde{\varrho}_{ji}$ have Bloch vectors which are opposite and of the same length. 
For such sets of states, one can find the optimal measurement analytically.

We first recall the following result.

\begin{proposition}[Prop.1 in \cite{carmeli2022quantum}] 
Consider a set of $n$ states $\{ \varrho_x\}_{x \in X}$ that are prepared with uniform apriori distribution.
We denote by $\lambda$ the largest eigenvalue of all those states, i.e., $\lambda:=\max_x \no{\varrho_x}$. 
Then, the success probability of discriminating these states is upper bounded as $P \leq \frac{d}{n} \lambda$. 
The upper bound is attained if and only if there exists a measurement $\M$ such that
\bea \label{eq:optcondition}
 \M (x) \varrho_x= \lambda \M(x) 
\eea
for all $x\in X$. In such a case, $\M$ is an optimal measurement. 
\end{proposition}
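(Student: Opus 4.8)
The plan is to treat the discrimination success probability as a linear optimization over POVMs, extract the bound from a single operator inequality, and then analyze when that inequality is saturated. First I would write the success probability with uniform prior as $P = \max_{\M} \tfrac{1}{n}\sum_{x\in X} \tr{\varrho_x \M(x)}$, where the maximum runs over all measurements $\{\M(x)\}_{x\in X}$ with $\M(x)\geq 0$ and $\sum_x \M(x)=\id$. Since the set of such POVMs is compact and the objective is continuous, this maximum is attained, so I may speak of an optimal measurement.

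For the upper bound, the key point is that each state is positive semidefinite, so $\varrho_x \leq \no{\varrho_x}\,\id \leq \lambda\,\id$, where $\lambda=\max_x \no{\varrho_x}$ is the largest eigenvalue occurring among the states. Because the trace of a product of two positive semidefinite operators is nonnegative, $\tr{(\lambda\id-\varrho_x)\M(x)}\geq 0$, i.e. $\tr{\varrho_x\M(x)}\leq \lambda\,\tr{\M(x)}$ for every $x$. Summing over $x$ and using $\sum_x\M(x)=\id$ together with $\tr{\id}=d$ gives $\tfrac{1}{n}\sum_x \tr{\varrho_x\M(x)}\leq \tfrac{\lambda}{n}\tr{\id}=\tfrac{d}{n}\lambda$, and taking the maximum over $\M$ yields $P\leq \tfrac{d}{n}\lambda$.

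For the equality characterization I would observe that the only inequality used was $\tr{(\lambda\id-\varrho_x)\M(x)}\geq 0$ for each $x$, so the bound is attained exactly when $\tr{(\lambda\id-\varrho_x)\M(x)}=0$ for all $x$. Here I invoke the standard fact that for positive semidefinite operators $A,B$ one has $\tr{AB}=0$ if and only if $AB=0$; applied with $A=\lambda\id-\varrho_x\geq 0$ and $B=\M(x)\geq 0$, the vanishing trace becomes the operator identity $(\lambda\id-\varrho_x)\M(x)=0$, that is $\varrho_x\M(x)=\lambda\M(x)$, which upon taking adjoints is the stated condition $\M(x)\varrho_x=\lambda\M(x)$. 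This settles the forward direction: an optimal measurement attaining the bound must satisfy the condition. Conversely, if some $\M$ satisfies $\M(x)\varrho_x=\lambda\M(x)$ for all $x$, then taking traces gives $\tr{\varrho_x\M(x)}=\lambda\,\tr{\M(x)}$, and summing recovers $\tfrac{1}{n}\sum_x\tr{\varrho_x\M(x)}=\tfrac{d}{n}\lambda$, so this $\M$ meets the upper bound and is therefore optimal.

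The computation is essentially immediate, so I do not expect a serious obstacle; the only delicate ingredient is the equality lemma for positive operators. I would justify $\tr{AB}=0\Rightarrow AB=0$ by writing $\tr{AB}=\tr{A^{1/2}BA^{1/2}}$ with $A^{1/2}BA^{1/2}=(B^{1/2}A^{1/2})^{\ast}(B^{1/2}A^{1/2})\geq 0$, so its vanishing trace forces $B^{1/2}A^{1/2}=0$ and hence $AB=0$. The only bookkeeping point is the self-adjointness convention: since $\lambda\id-\varrho_x$ and $\M(x)$ are self-adjoint and $\lambda$ is real, $\varrho_x\M(x)=\lambda\M(x)$ and $\M(x)\varrho_x=\lambda\M(x)$ are adjoints of each other and thus equivalent, so the form recorded in the statement is the one obtained after taking adjoints.
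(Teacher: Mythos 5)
Your proof is correct and complete: the bound follows from $\varrho_x\leq\lambda\id$ together with $\sum_x\M(x)=\id$, and the saturation condition follows from the standard fact that $\tr{AB}=0$ with $A,B\geq 0$ forces $AB=0$, which you justify properly. Note that the paper itself gives no proof of this proposition---it is recalled verbatim from Prop.~1 of the cited reference---but your argument is the standard one for this result and matches what that reference does, so there is nothing to flag.
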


Applying the result to sets of qubit states with the mentioned specific symmetry leads us to the following conclusion.

\begin{proposition}\label{prop:qubitsym}
Let $\{ \varrho_x\}_{x \in X}$ be a set of $n$ qubit states such that for each $\varrho_x$ there is exactly one state $\varrho_{x'}$ having the opposite Bloch vector with equal length. Let us denote $\lambda$ the largest eigenvalue of all the states. Then the success probability of discriminating these states is $P=\frac{2}{n}\lambda$. 
The optimal measurement is any convex combination of projective measurements on the states $\varrho_x$ that have the largest eigenvalue.
\end{proposition}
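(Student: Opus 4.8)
The plan is to combine the antipodal symmetry of the configuration with the attainability criterion of the preceding Proposition, specialized to $d=2$. First I would parametrize each state as $\varrho_x=\frac12(\id+\vec{r}_x\cdot\vec\sigma)$, so that its eigenvalues are $\frac12(1\pm\no{\vec{r}_x})$ and the largest eigenvalue $\lambda$ depends only on the Bloch length; consequently the states attaining $\lambda$ are exactly those whose Bloch vectors have maximal length $r_{\max}=2\lambda-1$. The upper bound $P\le\frac{2}{n}\lambda$ is then immediate from the previous Proposition with $d=2$, so the whole task reduces to exhibiting a measurement that meets its attainability condition $\M(x)\varrho_x=\lambda\M(x)$.

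The key structural observation is that the states realizing $\lambda$ come in antipodal pairs. Since the hypothesized involution $x\mapsto x'$ sends $\vec{r}_x$ to $-\vec{r}_x$ while preserving the Bloch length, it preserves the spectrum and hence maps the set of maximal-eigenvalue states into itself. For such a pair the rank-one eigenprojection $P_x=\frac12(\id+\hat{r}_x\cdot\vec\sigma)$ of $\varrho_x$ for the eigenvalue $\lambda$ and the corresponding projection $P_{x'}=\frac12(\id-\hat{r}_x\cdot\vec\sigma)$ of $\varrho_{x'}$ are orthogonal and satisfy $P_x+P_{x'}=\id$. Thus every antipodal pair of maximal states furnishes a genuine two-outcome projective measurement, and by spectral decomposition $P_x\varrho_x=\lambda P_x$ and $P_{x'}\varrho_{x'}=\lambda P_{x'}$.

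With this in hand I would construct an optimizer explicitly. Enumerating the maximal pairs $(x_1,x_1'),\dots,(x_r,x_r')$ and choosing weights $t_j\ge0$ with $\sum_j t_j=1$, I set $\M(x_j)=t_j P_{x_j}$, $\M(x_j')=t_j P_{x_j'}$, and $\M(x)=0$ on every state not attaining $\lambda$. Normalization $\sum_x\M(x)=\sum_j t_j(P_{x_j}+P_{x_j'})=\id$ is clear, and the condition $\M(x)\varrho_x=\lambda\M(x)$ holds on the maximal pairs by the previous step and holds trivially wherever $\M(x)=0$. The earlier Proposition then certifies at once both that $P=\frac{2}{n}\lambda$ and that every measurement of this form — that is, every convex combination of the projective measurements associated with the maximal-eigenvalue states — is optimal.

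The only genuinely delicate point, rather than a true obstacle, is that the eigenprojections need not be uniquely determined: this happens exactly when $\lambda=\frac12$ and all states collapse to the maximally mixed state, in which case the claim degenerates to the trivial random-guessing value $P=\frac1n$ and can be handled separately. Otherwise the maximal eigenvalue is nondegenerate, the $P_x$ are the unique rank-one eigenprojections, and the argument is complete. I expect the main conceptual content to be the recognition that the antipodal pairing lets the maximal states \emph{alone} close up into complete projective measurements, so that assigning zero effect to all sub-maximal states is fully consistent with optimality.
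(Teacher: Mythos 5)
Your construction is essentially identical to the paper's: both identify the maximal-eigenvalue states, pair them antipodally so that the two rank-one eigenprojections of each pair sum to the identity, assign zero effect to all sub-maximal states, and invoke the attainability criterion $\M(x)\varrho_x=\lambda\M(x)$ of the preceding Proposition to certify $P=\tfrac{2}{n}\lambda$. Your remark on the degenerate case $\lambda=\tfrac12$ is a reasonable (and harmless) addition that the paper leaves implicit.

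The one piece you omit is the converse direction. The statement asserts not only that every such convex combination is optimal, but that these are \emph{all} the optimal measurements; the paper proves this by using the ``only if'' half of the attainability criterion: any optimal $\M$ must satisfy $\M(x)\varrho_x=\lambda\M(x)$ for every $x$, which forces $\M(x)=0$ for $x\notin X^*$ and forces each $\M(y)$, $y\in X^*$, to be a nonnegative scalar multiple of the eigenprojection $\Lambda_y$; normalization then pins down the convex-combination structure. This is a short argument, but it is needed to justify the word ``is'' in ``the optimal measurement is any convex combination \dots'', so you should add it to close the characterization.
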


\begin{proof}
We denote by $X^*\subset X$ the subset of those indices for which the corresponding state have the maximal eigenvalue, i.e., $X^*=\{ x \in X| \no{\varrho_x} = \lambda\}$.
By the assumption, the set $X^*$ has even number of indices and they come in pairs, $y$ and $y'$, such that $\varrho_y$ and $\varrho_{y'}$ have opposite Bloch vectors with equal length.

We denote $\Lambda_y$ the one-dimensional eigenprojection of $\varrho_y$ associated with $\lambda$. 
We can choose freely convex coefficients $t_y$ and define a measurement as 
$$
\M(y)=t_y \Lambda_{y}, \quad \M(y')=t_y \Lambda_{y'}
$$
and
$$
\M(x)=0\quad  \forall x \not \in X^* \, .
$$
By the assumption, $\Lambda_{y}+\Lambda_{y'}=\id$ and therefore $\sum_x \M(x)=\id$.
It is straightforward to verify that $\M$ satisfies \eqref{eq:optcondition}. 

In the other direction, from \eqref{eq:optcondition} follows that an optimal measurement $\M$ must have $\M(x)=0$ for all $x\notin X^*$ and $\M(y)$ is a scalar multiple of $\Lambda_y$ for all $y \in X^*$.
It follows from the normalization and that $\M$ is a convex combination of projective measurements on the states $\varrho_y$ with $y\in X^*$.
\end{proof}

Since the optimal measurements for state discrimination with meta information are the optimal measurements for the auxiliary states, we establish the following results.

\begin{theorem}\label{thm:basis} 
When the value of basis is side information, the success probability with  
post-measurement information is 
\begin{equation}\label{basis:post}
P_{\mathrm{post}}=\frac{1}{2}+\frac{1}{2k}\max_{i} \sum_{j=1}^k |\cos\theta_{ij}|
\end{equation}
and with metainformation
\begin{equation}\label{basis:meta}
P_{\mathrm{meta}}=\frac{1}{2}+\frac{1}{2k} \max_{p_1 \ldots p_k}\sqrt{\sum_{i,j=1}^{k} p_i p_j \cos\theta_{ij}}
\end{equation}
where the maximization is over all possible combinations of parity $p_i\in \{+,-\}$. 

\end{theorem}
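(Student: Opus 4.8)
The plan is to prove the two formulas separately: \eqref{basis:meta} follows from the auxiliary-state reduction of Subsec.~\ref{sec:meta} combined with Proposition~\ref{prop:qubitsym}, while \eqref{basis:post} follows by directly optimizing the post-measurement relabeling over the family of standard measurements.

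For $\pmeta$ I would start from the reduction already established for metainformation: the discrimination probability equals $\Delta$ times the minimum-error discrimination probability of the auxiliary states, which for basis side information are the $2^k$ states $\widetilde{\varrho}_{\vec p}$ of \eqref{eq:aux-p} carried with the uniform prior, and $\Delta=2^{k-1}$. Because $\widetilde{\varrho}_{\vec p}$ and $\widetilde{\varrho}_{-\vec p}$ have opposite Bloch vectors of equal length, Proposition~\ref{prop:qubitsym} applies verbatim and gives the auxiliary success probability $\tfrac{2}{2^{k}}\lambda$, where $\lambda$ is the largest eigenvalue among the $\widetilde{\varrho}_{\vec p}$. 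Writing $\widetilde{\varrho}_{\vec p}=\tfrac12(\id+\vec r_{\vec p}\cdot\vec\sigma)$ with $\vec r_{\vec p}=\tfrac1k\sum_i p_i\vec n_i$, the eigenvalues are $\tfrac12(1\pm|\vec r_{\vec p}|)$, so $\lambda=\tfrac12\bigl(1+\max_{\vec p}|\vec r_{\vec p}|\bigr)$, and expanding $|\vec r_{\vec p}|^2=\tfrac1{k^2}\sum_{i,j}p_ip_j\cos\theta_{ij}$ gives the square root in \eqref{basis:meta}. Multiplying by $\Delta$, the factors $2^{k-1}$ and $2/2^{k}$ cancel, so $\pmeta=\lambda$, which is exactly \eqref{basis:meta}.

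For $\ppost$ I would use that Bob commits to a standard measurement $\mstd(i\pm)=t_i\varrho_{i\pm}$ before knowing that side information will arrive, so the relevant quantity is the best post-processing over this one-parameter family of optimal no-side-information measurements. With the uniform prior $q_x=\tfrac1{2k}$, the optimal relabeling $f_\ell$ sends an outcome $ip$ to whichever label in $X_\ell=\{\ell+,\ell-\}$ maximizes the overlap $\tr{\varrho_x\mstd(ip)}$. Using $\tr{\varrho_{\ell s}\varrho_{ip}}=\tfrac12(1+sp\cos\theta_{\ell i})$ and maximizing over $s\in\{+,-\}$ turns each overlap into $\tfrac12(1+|\cos\theta_{\ell i}|)$. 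Summing the terms $q_{f_\ell(y)}\tr{\varrho_{f_\ell(y)}\mstd(y)}$ over outcomes $y=ip$ and over side-information values $\ell$, the constant contributions collect to $\tfrac12$ and the remainder is $\tfrac1{2k}\sum_i t_i\sum_j|\cos\theta_{ij}|$; maximizing the convex weights $t_i$ then concentrates all weight on one basis and yields the $\max_i$ in \eqref{basis:post}.

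The routine parts are the trace evaluations and the bookkeeping of the normalization factors. The two points that need genuine care are, first, checking in the meta case that multiplying the auxiliary success probability by $\Delta$ exactly reproduces $\lambda$ (so that the $2^{k}$-dependence cancels cleanly), and second, the correct reading of "standard measurement" for $\ppost$: the $\max_i$ must be understood as arising from optimizing the free convex weights $t_i$ of an \emph{a priori} fixed optimal no-side-information measurement, and not from adapting the measurement to the side information, which is precisely what separates this case from the metainformation case.
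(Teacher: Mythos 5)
Your proposal is correct and follows essentially the same route as the paper: the metainformation formula via the auxiliary-state reduction plus Proposition~\ref{prop:qubitsym} (with the same cancellation of $\Delta=2^{k-1}$ against $2/2^{k}$), and the post-measurement formula by computing the optimally relabeled overlaps $\tfrac12(1+|\cos\theta_{\ell i}|)$ for the family $\mstd(i\pm)=t_i\varrho_{i\pm}$ and then concentrating the convex weights $t_i$ on the best basis. Your closing remark about how the $\max_i$ arises from the residual freedom in the standard measurement, not from adapting to the side information, matches the paper's treatment.
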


\begin{proof}

We will first obtain $\ppost$. 
 Recall that the success probability with post-measurement information is computed as
\begin{align*}
\ppost=\sum_\ell \sum_y \tr{M_y q_{f_\ell (y)} \rho_{f_\ell (y)}} \, .
\end{align*}
with the relabeling map given as
\begin{align*}
    f_\ell(y)=\arg \max_{x\in X_\ell} q_x\tr{\rho_x \M(y)} \, .
\end{align*}
Let us rewrite the measurement outcome $y=(j,p_j)$ where $j$ is the basis value and $p_j$ is the parity value. The standard measurement is any convex combination of projective measurements onto the basis, so $\M(j,p_j)=t_j \rho_{j,p_j}$ where $t_j$ are convex coefficients. Note that $f_\ell((j, p_j))=(l,p^*)$, where $p^*$ is a parity on the basis $\mathcal{B}_\ell$ closer to $p_j$ on basis $\mathcal{B}_j$. Then, the success probability is
\begin{eqnarray*}
\ppost &=& \sum_\ell \sum_y \tr{M_y q_{f_\ell (y)} \rho_{f_\ell (y)}}= \frac{1}{2k} \sum_j t_j \sum_\ell \sum_{p_j=\pm 1} \tr{\rho_{j,p_j} \rho_{f_\ell ((j,p_j))}}\\
&=&\frac{1}{2k} \sum_j t_j \sum_\ell  (1+|\cos \theta_{j\ell} |) =\frac{1}{2}+\frac{1}{2k} \sum_j t_j \Theta_j
\end{eqnarray*}
where $\Theta_j=\sum_\ell | \cos\theta_{ \ell j} |$. The maximum value of $\ppost$ is attained when the measurement is projective measurements onto the basis $j^*=\arg \max_j \Theta_j$.

Now let us find $\pmeta$. 
The auxiliary states are given in \eqref{eq:aux-p}. According to Prop. \ref{prop:qubitsym}, we need to find the auxiliary states that have the largest eigenvalue. Denote $\vec{p}^*=(p_1^*,\ldots,p_k^*)=\arg \max_{\vec{p}} \no{\sum_i p_i \vec{n}_i}$. Since $\no{\sum_i p^*_i \vec{n}_i}=\no{-\sum_i p^*_i \vec{n}_i}$, one can find a projective measurement in the direction of $\sum_i p^*_i \vec{n}_i$ and $-\sum_i p^*_i \vec{n}_i$ that satisfies \eqref{eq:optcondition}. Namely, any convex combination of projective measurement defined as
$$
\M(\pm\vec{p}^*)=\frac{1}{2}(\id \pm\frac{\sum_i p^*_i \vec{n}_i }{\no{\sum_i p^*_i \vec{n}_i }} \cdot \vec{\sigma})
$$
is the optimal measurement for the auxiliary states. 
The success probability for the auxiliary states is then found by using Prop.\ref{prop:qubitsym},
$$
\widetilde{P}=\frac{2}{2^k} \no{\widetilde{\varrho}_{\vec{p}^*}} =\frac{1}{2^k} (1+\frac{1}{k} \max_{\vec{p}} \no{\sum_i p_i \vec{n}_i })=\frac{1}{2^k}(1+\frac{1}{k}\max_{\vec{p}}\sqrt{\sum_{i,j} p_i p_j \cos\theta_{ij}}) \, .
$$
Since $\Delta=2^{k-1}$, we finally obtain the success probability with metainformation
\begin{align*}
\pmeta=\Delta \cdot \widetilde{P}=\frac{1}{2}+\frac{1}{2k} \max_{p_1 \ldots p_k}\sqrt{\sum_{i,j=1}^{k} p_i p_j \cos\theta_{ij}} \, .
\end{align*}

\end{proof}

\begin{theorem}\label{thm:parity}
When the value of parity is the side information, the success probability with post-measurement information is 
\begin{equation}\label{parity:post}
P_{\mathrm{post}}=\frac{1}{k}(\frac{3}{2}-2\cos\theta^*).
\end{equation}
and with metainformation 
\begin{equation}\label{parity:meta}
P_{\mathrm{meta}}=\frac{1}{k}(1+\sin\frac{\theta^*}{2})
\end{equation}
where $\theta^*=\max_{ij} \theta_{ij}$.
 
We have $P_{\mathrm{meta}}>P_{\mathrm{post}}$ whenever $0<\theta^*<\pi $.
\end{theorem}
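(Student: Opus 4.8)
The plan is to compute $\ppost$ and $\pmeta$ separately and then compare them. For $\ppost$ I would fix the standard measurement $\mstd(j,p_j)=t_j\varrho_{j,p_j}$ with $\sum_j t_j=1$, writing a generic outcome as the pair $(j,p_j)$ of basis value and parity. For the uniform prior $q_x=\tfrac{1}{2k}$, the relevant overlap of a candidate guess $i\ell$ with this outcome is $\tr{\varrho_{i\ell}\mstd(j,p_j)}=\tfrac{t_j}{2}(1+\ell p_j\cos\theta_{ij})$, using $\vec n_i\cdot\vec n_j=\cos\theta_{ij}$. The optimal relabeling $f_\ell$ then splits into two cases: when the side-information parity $\ell$ agrees with the outcome parity $p_j$, the quantity $1+\cos\theta_{ij}$ is maximised at $i=j$ (where $\theta_{jj}=0$), giving the value $2$; when $\ell$ and $p_j$ disagree, $1-\cos\theta_{ij}$ is maximised by the basis farthest from $\mathcal B_j$, i.e. at $\phi_j:=\max_i\theta_{ij}$, giving $1-\cos\phi_j$. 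Summing the four $(\ell,p_j)$ contributions for each $j$ yields $\tfrac{t_j}{2k}(3-\cos\phi_j)$, and maximising the convex combination over the weights $t_j$ concentrates all weight on the basis realising $\max_j\phi_j=\theta^*$, giving $\ppost=\tfrac{1}{2k}(3-\cos\theta^*)$ (for $k=2$ this reproduces the introduction's $\ppost=\tfrac14(3-\cos\theta)$, which pins down the coefficient of $\cos\theta^*$ in \eqref{parity:post}).

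For the metainformation case I would pass to the auxiliary states $\widetilde\varrho_{ij}=\tfrac12(\id+\tfrac12(\vec n_i-\vec n_j)\cdot\vec\sigma)$ of \eqref{eq:aux-ij}, with uniform prior over the $k^2$ labels and $\Delta=k$. The Bloch vector of $\widetilde\varrho_{ij}$ has length $\tfrac12\no{\vec n_i-\vec n_j}=\sin\tfrac{\theta_{ij}}{2}$, so its largest eigenvalue is $\tfrac12(1+\sin\tfrac{\theta_{ij}}{2})$; this is maximised at $\theta_{ij}=\theta^*$, giving $\lambda=\tfrac12(1+\sin\tfrac{\theta^*}{2})$. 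The states attaining $\lambda$ occur in antipodal pairs $\widetilde\varrho_{ij},\widetilde\varrho_{ji}$ whose $\lambda$-eigenprojections are orthogonal and sum to $\id$, so a two-outcome projective measurement satisfying \eqref{eq:optcondition} exists and Proposition \ref{prop:qubitsym} (equivalently Proposition 1 with $d=2$, $n=k^2$) gives $\widetilde P=\tfrac{2}{k^2}\lambda$. Multiplying by $\Delta=k$ yields $\pmeta=\tfrac1k(1+\sin\tfrac{\theta^*}{2})$, which is \eqref{parity:meta}.

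For the strict inequality I would simply subtract, obtaining $\pmeta-\ppost=\tfrac{1}{2k}\big(2\sin\tfrac{\theta^*}{2}+\cos\theta^*-1\big)$. Using the half-angle identity $\cos\theta^*=1-2\sin^2\tfrac{\theta^*}{2}$, the bracket collapses to $2\sin\tfrac{\theta^*}{2}\big(1-\sin\tfrac{\theta^*}{2}\big)$, so that $\pmeta-\ppost=\tfrac1k\sin\tfrac{\theta^*}{2}\big(1-\sin\tfrac{\theta^*}{2}\big)$. For $0<\theta^*<\pi$ we have $\tfrac{\theta^*}{2}\in(0,\tfrac\pi2)$ and hence $\sin\tfrac{\theta^*}{2}\in(0,1)$, making both factors strictly positive; thus $\pmeta>\ppost$, as claimed.

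The delicate step is the $\ppost$ relabeling analysis: correctly recognising that the mismatched-parity case is governed by the \emph{farthest} basis $\phi_j=\max_i\theta_{ij}$ rather than by $i=j$, and then checking that the subsequent maximisation over the convex weights $t_j$ selects the basis realising the global maximum $\theta^*$. A secondary technical point is that the diagonal auxiliary states $\widetilde\varrho_{ii}$ are maximally mixed and do not literally meet the ``exactly one antipodal partner'' hypothesis of Proposition \ref{prop:qubitsym}; this is harmless, since only the maximal-eigenvalue states (all off-diagonal, with $\theta_{ij}=\theta^*$) enter the optimal measurement, so invoking Proposition 1 directly through the explicit antipodal-pair measurement sidesteps the issue.
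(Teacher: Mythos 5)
Your proof is correct and follows essentially the same route as the paper's: the case split on whether the side-information parity matches the outcome parity for $\ppost$, and the reduction to minimum-error discrimination of the auxiliary states $\widetilde{\varrho}_{ij}$ via Proposition~\ref{prop:qubitsym} for $\pmeta$. One point worth flagging: your value $\ppost=\tfrac{1}{2k}(3-\cos\theta^*)=\tfrac{1}{k}(\tfrac32-\tfrac12\cos\theta^*)$ agrees with the paper's own derivation and with the $k=2$ formula $\tfrac14(3-\cos\theta)$ quoted in Section~\ref{sec:setup}, so the coefficient $-2\cos\theta^*$ displayed in \eqref{parity:post} is a typo for $-\tfrac12\cos\theta^*$ and your computation is the correct one. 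Your two refinements --- the observation that the maximally mixed diagonal auxiliary states $\widetilde{\varrho}_{ii}$ do not literally satisfy the antipodal-pair hypothesis but are harmless because they never attain the maximal eigenvalue, and the explicit algebraic verification that $\pmeta-\ppost=\tfrac1k\sin\tfrac{\theta^*}{2}\bigl(1-\sin\tfrac{\theta^*}{2}\bigr)>0$ for $0<\theta^*<\pi$ --- are both correct and fill small gaps that the paper's proof leaves implicit.
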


\begin{proof}
Let us first find $\ppost$. Denote $\ell\in \{1,-1\}$ the value of parity sent as a side information and a measurement outcome $y=(j,p_j)$. The standard measurement is $\mstd(j,p_j)=t_j \varrho_{j,p_j}$ where $t_j$ are convex coefficients. The optimal post-processing map is $f_\ell((j,p_j))=(j^*,\ell)$, where 
\begin{eqnarray*}
j^*&=&\arg\max_i \tr{\varrho_{i,\ell} \mstd(j,p_j)}=\arg \max_i t_j \tr{\varrho_{i,\ell} \varrho_{j,p_j}}\\
&=&\arg \max_i (-1)^{1-\delta_{\ell, p_j} }\cos\theta_{ij}
\end{eqnarray*}
This means that if $\ell=p_j$, the relabeling map will choose $j^*=j$. If $\ell \neq p_j$, then $j^*$ will be chosen such that it minimizes $\cos\theta_{ij}$, i.e., it maximizes $\theta_{ij}$. Thus, 
\begin{eqnarray*}
\ppost&=&\frac{1}{2k}\sum_{\ell=\pm} \sum_y \tr{\varrho_{f_\ell (y)} \M(y)}= \frac{1}{2k}
\sum_{\ell = \pm} \sum_j \sum_{p_j = \pm} t_j \tr{\varrho_{f_\ell ((j,p_j))} \varrho_{j,p_j}}\\
&=& \frac{1}{2k} \sum_j t_j (\tr{\varrho_{f_+(j+)} \varrho_{j+}}+\tr{\varrho_{f_+(j-)} \varrho_{j-}}+\tr{\varrho_{f_-(j+)} \varrho_{j+}}+\tr{\varrho_{f_-(j-)} \varrho_{j-}} )\\
&=&\frac{1}{2k}(2+2\sum_jt _j \tr{\varrho_{f_+(j-)} \varrho_{j-}})=\frac{1}{k}(\frac{3}{2}-\frac{1}{2}\sum_j t_j \cos\theta_{j j^*})
\end{eqnarray*}
The optimal measurement is the projective measurements on the basis $j$ that minimizes $\cos \theta_{j j^*}$. 
In other words, the whole problem is to find the largest angle $\theta_{ij}$ for all $i,j$. The optimal measurement is a convex combination of the projective measurements on all such $i$ and $j$. 

Now let us compute $\pmeta$. 
The auxiliary states are $\widetilde{\rho}_{ij}$ are given in \eqref{eq:aux-ij}. 
The anticipative measurements are the optimal measurements for the auxiliary states, which is any convex combination of projective measurements on the auxiliary states $\widetilde{\rho}_{ij}$ that have the largest eigenvalue.  The success probability of the auxiliary ensemble is
\begin{align*}
\widetilde{P}&=\frac{2}{k^2} \max_{ij} ||\widetilde{\varrho}_{ij}||
=\frac{1}{k^2}(1+\max_{ij}\frac{1}{2} ||\vec{n}_i - \vec{n}_k||)\\
&=\frac{1}{k^2}(1+\sin\frac{\theta^*}{2})
\end{align*}
where $\theta^*=\max_{ij}\theta_{ij}$. Since $\Delta=k$, we get
\begin{align*}
    \pmeta=\Delta \cdot \widetilde{P}=\frac{1}{k}(1+\sin\frac{\theta^*}{2}) \, .
\end{align*}

\end{proof}

\subsection{Examples}

\begin{figure}
  \centering
  \begin{subfigure}[b]{0.45\textwidth}
    \centering
    \includegraphics[width=\linewidth]{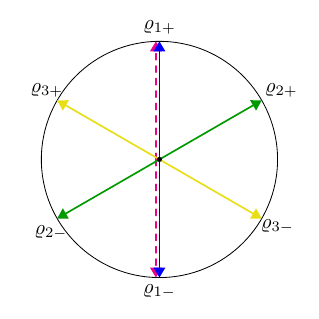}
    \caption*{(a)}
  \end{subfigure}
  \hspace{0.04\textwidth}
  \begin{subfigure}[b]{0.45\textwidth}
    \centering
    \includegraphics[width=\linewidth]{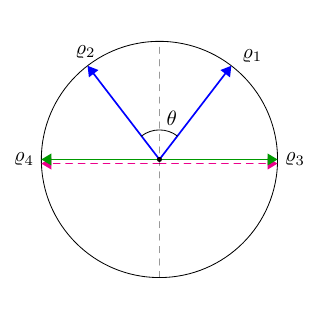}
    \caption*{(b)}
  \end{subfigure}
  \caption{Two examples that illustrate different inequalities in success probabilities with side information which is marked by different colors. Anticipative measurements are depicted as purple dashed arrows. In (a), encodings are equally separated bases in a plane. Since the anticipative measurement coincides with a standard measurement, $\ppre>\pmeta=\ppost$. In (b), the standard measurement is the same as the optimal measurement with pre-measurement information, and therefore $\ppre=\pmeta=\ppost$.  }
  \label{fig:min-examples}
\end{figure}

In Sec. \ref{sec:setup} we have seen that there are situations where $\ppre > \pmeta>\ppost>P$ and $\ppre = \pmeta>\ppost>P$. In the following we demonstrate that also other inequalities can turn into equalities, with suitable side information.

\subsubsection{The case $\ppre>\pmeta=\ppost>P$}

An interesting occurence is when side information is useful but metainformation is not.
Let us consider a symbol set $X=\{1\pm, 2\pm, 3\pm\}$ with the encoding 
$$
\varrho_{j\pm}=\frac{1}{2} \left(\id \pm  (\sin\left(\tfrac{j\pi}{3}\right) \sigma_x + \cos\left(\tfrac{j\pi}{3}\right)\sigma_z ) \right), j=1,2,3.
$$
These 6 states are uniformly separated in a plane; see Fig. \ref{fig:min-examples} (a). 
The optimal measurement is any convex combination of projective measurements onto the basis, which yields $P=\frac{1}{3}$. 
Consider the partition $X=X_1 \cup X_2 \cup X_3$ where $X_j=\{j\pm\}$ so the side information is the value of the basis. If the basis is known before the measurement, Bob will discriminate orthogonal states, so $\ppre=1$. 
By Theorem \ref{thm:basis}, we have
$\pmeta=\ppost=\frac{5}{6}$. 
Hence,
$$
\ppre>\pmeta=\ppost>P \, .
$$

\subsubsection{A case with $\ppre=\pmeta=\ppost>P$}

It can happen that post-measurement side information without metainformation is as useful as pre-measurement side information, i.e., $\ppre=\ppost$. 
To illustrate that occurence, consider a following label set with a partition $X=X_1 \cup X_2$, where the subsets consist of two states $X_1=\{1, 2\}, X_2=\{3, 4\}$ defined as 
$$
\rho_{1,2}=\frac{1}{2}(\id \pm\sin(\frac{\theta}{2}) \sigma_x + \cos(\frac{\theta}{2})\sigma_z) \, , \quad \rho_{3,4}=\frac{1}{2}(I\pm \sigma_x) ,
$$
see Fig. \ref{fig:min-examples} (b).
Without classical side information, the optimal measurement is $\sigma_x$ measurement and the success probability is $P=\frac{1}{2}$. 
In the pre-measurement information scenario, the optimal measurements for the two subsets $X_1$ and $X_2$ are the $\sigma_x$ measurement which yields $\ppre=\frac{3}{4}(1+\sin(\frac{\theta}{2}))$. In post-measurement information case, Bob makes the $\sigma_x $ measurement and guesses a state in the known subset $X_\ell$ that is closest to the effect corresponding to the measurement outcome. After this post-processing, the success probability is $\ppost=\frac{3}{4}(1+\sin(\frac{\theta}{2}))$. Therefore, 
$$
\ppre=\pmeta=\ppost>P \, .
$$

\section{Maximum-confidence discrimination tasks}\label{sec:max}

So far, we have used the success probability as our primary figure of merit. However, in the context of state discrimination, a more refined figure of merit is the concept of \emph{confidence} and the probability of achieving it. 
While success probability reflects the overall distinguishability of the ensemble, the confidence for a symbol $x$, denoted as $c(x)$, quantifies the amount of information about $x$ that a measurement extracts. Specifically, it is defined as the conditional probability that the outcome of a measurement correctly identifies the symbol $x$:
\begin{equation}
c(x)=\Pr(X=x|Y=x)=\frac{q_x\tr{\varrho_x \M(x)}}{\tr{\varrho \M(x)}} \label{eq:confidence}
\end{equation}
where $Y$ denotes measurement outcomes, $\varrho=\sum_x q_x \varrho_x$, and $\M(x)$ is an effect denoting an outcome $x$ of a measurement $\M$. A natural accompanying figure of merit is the probability with which this confidence is attained. We refer to the sum of these probabilities over all symbols as the \emph{conclusive rate}.

Interestingly, the figures of merits in different discrimination strategies such as minimum-error discrimination and unambiguous discrimination can be unified by confidence and the conclusive rate. For instance, the success probability can be written as the average confidence over the ensemble,
\begin{align*}
    P=\sum_{x\in X} q_x\tr{\varrho_x \M(x)}=\sum_{x \in X} c(x) \tr{\varrho \M(x)},
\end{align*}
where $\tr{\varrho \M(x)}$ is interpreted as the probability of obtaining confidence $c(x)$. On the other hand, a measurement realizes unambiguous discrimination if $c(x)=1$ for all $x\in X$. 
If the conclusive rate is maximized, this measurement achieves the optimized unambiguous discrimination \cite{chefles1998optimum}.

In this section, we consider maximum-confidence discrimination with side information. 
Maximum confidence measurement (MCM) is a measurement that maximizes confidence for all $x$. We denote the maximum confidence of $x$ as $C(x)$ and this can be obtained from the following optimization,
\bea
C(x)=\max_{\M(x)} \frac{ q_x\tr{ \varrho_x\M(x)}}{\tr{\varrho \M(x)}}. \label{eq:maxconfi}
\eea 
Note that while unambiguous discrimination can be realized only for restricted class of ensembles, such as linearly independent pure states, maximum confidence measurement can be found for any collection of states.
The analytic solution of \eqref{eq:maxconfi} is known in \cite{croke2006maximum,lee2022maximum},
\bea
C(x)=\no{\sqrt{\varrho}^{-1} q_x \varrho_x \sqrt{\varrho}^{-1}}. \nonumber
\eea

In the following, we denote $\N$ an MCM, i.e.,$\N(x)=\arg\max c(x)$. Note that $\N(x)$ is not unique, since for any positive scalar can be multiplied to $\N(x)$ and still gives the same confidence. The analytic solution of a MCM is,
\begin{equation*}
\N(x)\propto \sqrt{\varrho}^{-1}\Lambda(\sqrt{\varrho}^{-1} \varrho_x\sqrt{\varrho}^{-1})\sqrt{\varrho}^{-1} \, ,
\end{equation*}
where $\Lambda(\cdot)$ denotes the eigenprojections associated with the largest eigenvalue. 
Although it is not difficult to consider a mixed state, we mainly focus on the case when $\varrho_x$ is a pure state. 
In such a case, MCM can be written as
\bea
\N(x)=a_x \Pi_x \label{eq:mcm}
\eea
where $0\leq a_x\leq 1$ and 
\bea
\Pi_x=\frac{\varrho^{-1} \varrho_x \varrho^{-1}}{\tr{\varrho^{-1} \varrho_x \varrho^{-1}}} \label{eq:mcmprojector}
\eea
is a rank-1 projection. 
We call $\Pi_x$ a \textit{MCM projection} and it is uniquely determined by the state ensemble. 
Since a scalar $a_x$ does not affect confidence, $\Pi_x$ is the sole factor for the value of $C(x)$.  

In general, $\sum_{x \in X}\N(x)$ may not be the identity operator. 
When needed, we include an additional effect, which we denote as $\N(0)$, such that the collection of effects form a POVM, i.e., $\sum_{x \in X} \N(x)+\N(0)=\id$. 
One can further optimize a MCM $\N$ such that it maximizes the probability of obtaining the maximum confidence $C(x)$. We call it the optimal \textit{conclusive rate} $R$, which can be obtained via
\bea
R=\max_\N \tr{\varrho \sum_{x \in X} \N(x)} \, . \label{eq:conclusive}
\eea
When an ensemble consists of only pure states, the maximization is written as 
\bea
R=\max_{\{a_x\}}\sum_{x=1}^n a_x \tr{\varrho \Pi_x} \, .
\eea
We call a measurement the \textit{optimal} MCM if it maximizes both the confidence and the conclusive rate. 

When side information is provided, we need different formulas for confidence and the conclusive rate other than \eqref{eq:confidence} and \eqref{eq:conclusive}. Suppose a set of symbols is partitioned as $X=\cup_\ell X_\ell$ and the side information is the value of the subset $\ell$.  Given $\ell$ and the measurement outcome $y$ of a measurement $\M$, the conditional probability that a symbol $x$ is sent can be found by using the chain rule,
\begin{equation}
\Pr(x|y,\ell)=\frac{\Pr(x|\ell) \Pr(y|x,\ell)}{\Pr(y|\ell)}=\frac{q_x\tr{\varrho_x \M(y)}}{\qxl \tr{ \rxl \M(y) }} \label{eq:confiside}
\end{equation}
where $\qxl=\sum_{x\in X_\ell} q_x$ and $\rxl=\frac{\sum_{x \in X_\ell} q_x \varrho_x}{\qxl}$. This quantity will be used for calculating the maximum confidence with side information. 
In addition, different measurement outcomes can be considered as conclusive when side information is taken into account.

\subsection{Pre-measurement information}

Let us first find MCM with pre-measurement information. If the subset $X_\ell$ is known before measurement, Bob can simply make a measurement that maximizes confidence for all $x \in X_\ell$. 
The maximum confidence with pre-measurement information is obtained from the optimization by using \eqref{eq:confiside},
\bea
\cpre(x)&=&\max_{\M(x) \geq 0} \frac{q_x\tr{\varrho_x \M(x)}}{\qxl \tr{ \rxl \M(x)}}=q_x \qxl^{-1} \no{\sqrt{\rxl}^{-1} \varrho_x \sqrt{\rxl}^{-1}} \, .
\eea
Let us denote $\N_\ell$ the MCM when the pre-measurement information $\ell$ is given. Just as MCM without side information, $\sum_{x \in X_\ell} \N_\ell(x)$  does not always form an identity. In such a case, we need to include an additional effect, $\N_\ell(0)$, such that $\sum_{x \in X_\ell} \N_\ell (x)+\N_\ell (0)=\id$.  When $\varrho_x$ is a pure state, $\N_\ell(x)$ is a rank-1 operator, 
\bea
\N_\ell(x)=a_x \widetilde{\Pi}_x \label{eq:premcm}
\eea
where
\bea
\widetilde{\Pi}_x=\frac{\rxl^{-1}\varrho_x \rxl^{-1}}{\tr{\rxl^{-1}\varrho_x \rxl^{-1}}} \,  \label{eq:premcmproj}
\eea
and $0\leq a_x \leq 1$ is a parameter to be further optimized. If $\rxl$ is not invertible, then we take pseudoinverse instead. 
When $\varrho_x$ is prepared, the measurement outcome $y$ is conclusive if $y \in X_\ell$. 
Therefore, the optimal conclusive rate given side information $X_\ell$ is
\bea
R_\ell=\max_{\N_\ell}\tr{\rxl \sum_{y \in X_\ell} \N_\ell (y)} \, .
\eea
The overall optimal conclusive rate with pre-measurement information is
\bea
\rpre=\sum_\ell \qxl R_\ell \, .
\eea
MCM with pre-measurement information in \eqref{eq:premcm}is found by considering $\qxl\rxl$, the unnormalized state for subset $X_\ell$, instead of $\varrho$ as in \eqref{eq:mcm}, the ensemble state for the set $X$. 

\subsection{Post-measurement information without metainformation}

In post-measurement information scenario, Bob makes a MCM $\N$ defined in \eqref{eq:mcm} and post-processes the measurement outcome with side information to guess $x \in X_\ell$ with maximum confidence.  
Given a measurement outcome $y$ and side information $\ell$, the conditional probability of $x\in X_\ell$ is
\begin{equation}
\Pr(x |y,\ell)=\frac{q_x \tr{\varrho_x \N(y)}}{\qxl \tr{\rxl \N(y)}} \, .
\end{equation}
Since we are considering the maximum confidence, we only consider the highest conditional probability for each $x$. Therefore, the maximum confidence of $x$ with post-measurement information is defined as
\bea
\cpost(x)=\max_{y} \Pr(x|y,\ell)=\max_y \frac{q_x \tr{\varrho_{x} \N(y)}}{\qxl \tr{\rxl \N(y)}} \, .\label{eq:cpost}
\eea
We denote 
\bea
Y_x = \{ y \in Y : \Pr(x|y,\ell) = \cpost(x) \} \, . \label{eq:ysubset}
\eea
From the receiver's viewpoint the set of conclusive outcomes, when given the side information $\ell$, is $Y_\ell \equiv \cup_{x \in X_{\ell}} Y_x$.
When the receiver records one of the conclusive outcomes and gets side information $\ell$, he relabels $y$ to $x$ for which $y\in Y_x$.

The conclusive rate is
\bea
\rpost=\sum_\ell  \qxl \tr{\rxl \sum_{y \in Y_{\ell}} \N(y)} \, .\label{eq:rpost}
\eea
We remark that the conclusive rate in \eqref{eq:rpost} can be lower than the optimal conclusive rate without side information in \eqref{eq:conclusive}.

\subsection{Post-measurement information with metainformation}

Suppose that Bob knows about the partition. Recall that the MCM projections for pre-measurement case in \eqref{eq:premcmproj} can be implemented with a single measurement if he knows about the partition. Namely, a measurement defined as
\bea
\N(x)=a_x \widetilde{\Pi}_x \label{eq:mcmmeta},
\eea
where $0 \leq a_x \leq 1$ is a free parameter to be further optimized and $\widetilde{\Pi}_x$ is defined in \eqref{eq:premcmproj},is the MCM with metainformation. Therefore, when the figure of merit is confidence, the post-measurement information with metainformation is always as good as pre-measurement information, i.e., 
$$
\cmeta(x)=\cpre(x)
$$
for all $x\in X$.

However, the optimal conclusive rate can be different. 
This is because Bob needs to implement the projections $\widetilde{\Pi}_x$ simulatneously for all $x \in X$, wheares in pre-measurement information case he only needs to implement $\widetilde{\Pi}_x$ for $x \in X_\ell$.  A measurement outcome $y$ with side information $\ell$ is conclusive if $y \in X_\ell$. The conclusive rate with metainformation is then 
\bea
\rmeta=\max_\N \sum_\ell \qxl \tr{\rxl \sum_{y \in X_\ell } \N(y)} \label{eq:rmeta}
\eea
where $\N$ is given in \eqref{eq:mcmmeta}. 
Therefore,
\bea
\rpre \geq \rmeta \, . \nonumber
\eea
We call the measurement defined in \eqref{eq:mcmmeta} that gives $\rmeta$ the \emph{anticipative} MCM. Although the confidence obtainable with pre-measurement information can always be achieved with post-measurement information with metainformation, pre-measurement information can yield a higher conclusive rate.
\subsection{Examples}

In this Subsection, we provide several examples that illustrate the advantage of having metainformation in maximum-confidence discrimination. These cases show that metainformation can be entirely useless, somewhat useful but not as effective as pre-measurement information, or useful as pre-measurement information.

\subsubsection{BB84 states: $\cpre=\cmeta=\cpost>C$ and $\rpre>\rmeta=\rpost$ case.}
Let us consider the symbol set $X=\{1+,1-,2+,2-\}$ with the partition $X=X_1\cup X_2$ where $X_\ell=\{\ell+,\ell-\}$. The symbols are encoded in $\sigma_x$ and $\sigma_z$ basis as 
\bea
\varrho_{1\pm}=\frac{1}{2}(\id \pm  \sigma_z ), \varrho_{2\pm}=\frac{1}{2}(\id \pm  \sigma_x ) \, , \nonumber
\eea
and the side information is the value of the basis, $\ell$. See Fig. \ref{fig:bb84} (a).  The task is to discriminate the states with the maximum confidence with side information. 

Suppose $x \in X_\ell$ is chosen. Without side information, the optimal MCM is a randomly chosen projective measurement related to $\sigma_x$ and $\sigma_z$,
\bea
\N(j\pm)=t_j \varrho_{j\pm} \, , \label{eq:bb84mcm}
\eea
where $t_j$ are convex coefficients, and any of these measurements yield 
$$C(x)=\frac{1}{2},  R=1 \,.$$ 
Let us first consider the pre-measurement information case. The side information $\ell$ is sent to Bob before making a measurement, and he will make the optimal MCM for $X_\ell$. Since $X_\ell$ contains only orthogonal states for both $\ell=1,2$, he can perfectly discriminate all symbols in $X$. In other words, these measurements yield 
$$\cpre(x)=1, \rpre=1\,.$$ 

The second scenario is when the side information $\ell$ is sent after a measurement. Bob makes the standard MCM in \eqref{eq:bb84mcm} and he will post-process measurement outcomes with the side information. The subsets in \eqref{eq:ysubset} are $Y_x=\{x\}$, and the measurement outcome $y$ is conclusive with given side information $\ell$ if $y \in Y_\ell=\{\ell+,\ell-\}$. In other words, a measurement outcome is conclusive if the bases used by Alice and Bob match. When the outcomes are conclusive, the measurement with post-processing yields 
$$\cpost(x)=1 \,. $$
The conclusive rate with post-measurement information is
$$
\rpost=\sum_\ell  \qxl \tr{\rxl \sum_{y \in Y_{\ell}} \N(y)}=\frac{1}{2}
$$
for any values of $t_j$ in \eqref{eq:bb84mcm}. This is the probability that Alice and Bob use the same basis. 

The last scenario is when Bob gets post-measurement information with metainformation, i.e., he knows about the partition. In this case, however, metainformation does not provide any advantage for confidence or conclusive rate. The MCM projections in \eqref{eq:premcmproj} are $\widetilde{\Pi}_x=\varrho_x$ and with these projections the optimal conclusive rate is 
\bea
\rmeta=\max_{\{a_y\}} \sum_\ell \qxl \tr{\rxl \sum_{y \in X_\ell} a_y \widetilde{\Pi}_y} =\frac{1}{2} \nonumber
\eea
In summary, 
\bea\nonumber
\cpre(x)=\cmeta(x)=\cpost(x),~
\rpre>\rmeta=\rpost \, .
\eea
We remark that the scenario of the BB84 protocol can be interpreted as a task of maximum-confidence discrimination with post-measurement information, and metainformation is not useful in this case. 

\begin{figure}[]
  \centering
  \begin{subfigure}[b]{0.45\textwidth}
    \centering
    \includegraphics[width=\linewidth]{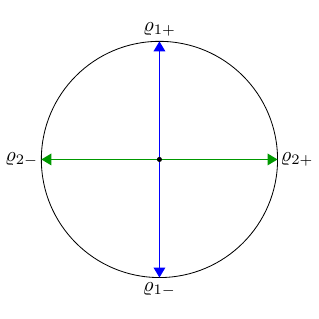}
    \caption*{(a) BB84 states }
  \end{subfigure}
  \hspace{0.04\textwidth}
  \begin{subfigure}[b]{0.45\textwidth}
    \centering
    \includegraphics[width=\linewidth]{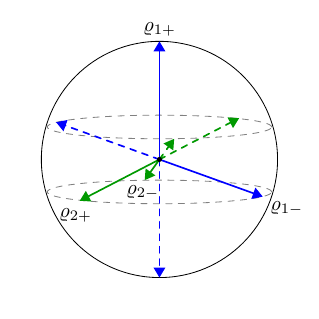}
    \caption*{(b) Tetrahedron states }
  \end{subfigure}
  \caption{Two possible ways of encoding symbols $X=\{1\pm, 2\pm\}$. In (a), the symbols are encoded in $\sigma_x$ and $\sigma_z$ basis as in the BB84 protocol. In this case, the anticipative measurement and the standard measurement are identical. The metainformation is not useful, i.e., $\cmeta=\cpost$ and $\rmeta=\rpost$, and the states can be unambiguously identified with post-measurement information. In (b), the symbols are encoded in states that form a tetrahedron. For this encoding, the standard measurement is the same as the states, wheares the anticipative measurement is the orthogonal complement of the states, depicted by dashed arrows. The metainformation offers advantage $\cmeta>\cpost$ and the states can be unambiguously identified with post-measurement information only if metainformation is provided. }
  \label{fig:bb84}
\end{figure}

\subsubsection{BB84-like task with tetrahedron states: $\cpre=\cmeta>\cpost>C$ and $\rpre>\rmeta$ case.}
One may instead use a different ensemble to do BB84-like task with the same symbol set $X=\{1+,1-,2+,2-\}$.  For instance, consider tetrahedron states $\varrho_x=\ketbra{\psi_x}$  defined as
\begin{align*}
\ket{\psi_{1+}}&=\ket{0}, ~
\ket{\psi_{1-}}=\frac{1}{\sqrt{3}}\ket{0}+\sqrt{\frac{2}{3}}\ket{1}\\
\ket{\psi_{2\pm}}&=\frac{1}{\sqrt{3}}\ket{0}+e^{\pm\frac{2\pi i}{3}}\sqrt{\frac{2}{3}}\ket{1} \, ,
\end{align*}
see Fig. \ref{fig:bb84} (b). 
When there is no classical side information, the optimal MCM is 
$$\N(x)=\frac{1}{2}\varrho_x \, .$$ 
This measurement yields $C(x)=\frac{1}{2}$ and $R=1$, just as in the previous example. However, this ensemble will show that metainformation can be useful in maximum-confidence discrimination with post-measurement information. 

Firstly, let us consider the pre-measurement information scenario. The partition is $X=X_1 \cup X_2$ where $X_\ell=\{\ell_+,\ell-\}$. Based on $\ell$, Bob makes the optimal MCM for $X_\ell$ with the projections defined as
\bea
\widetilde{\Pi}_{\ell \pm}=\varrho^\perp_{\ell \mp} \, . \label{eq:tetraproj}
\eea
The optimal conclusive rate for $X_\ell$ is obtained from
\bea
R_\ell=\max_{\{a_{\ell\pm}\}} \tr{\rxl (a_{\ell +} \widetilde{\Pi}_{\ell +}+a_{\ell -} \widetilde{\Pi}_{\ell -})}=1-\frac{1}{\sqrt{3}} \nonumber
\eea
These measurements yield
$$\cpre(x)=1, ~\rpre=1-\frac{1}{\sqrt{3}} \, .$$ 
Now suppose that the side information is sent after the measurement. The subsets in \eqref{eq:ysubset} are $Y_x=\{x\}$, and with post-processing the measurement yields
\bea
\cpost(x)=\frac{3}{4},~ \rpost=\frac{2}{3} \nonumber
\eea
Lastly, suppose Bob gets metainformation along with post-measurement information. Knowing about the partition, he can implement all projections in \eqref{eq:tetraproj} with a single measurement $\N$ whose effects are defined as $\N(\ell\pm)=a_{\ell\pm} \varrho_{\ell\mp}^\perp$ where $a_{\ell \pm}$ are parameters to be determined. 
Since these projections are the same projections for MCM with pre-measurement information,
$$
\cmeta(x)=1 \,.
$$
The optimal conclusive rate can be obtained by finding the optimal parameters $a_{\ell \pm}$

\bea
\rmeta=\max_{\{a_{\ell \pm } \}} \sum_\ell \qxl \tr{\rxl ( a_{\ell +} \varrho_{\ell -}^\perp+a_{\ell -} \varrho_{\ell +}^\perp)} =\frac{1}{3} \nonumber
\eea
Therefore,
\begin{align*}
    \cpre(x)=\cmeta(x)>\cpost(x),~ \rpre>\rmeta \, .
\end{align*}
This example demonstrates that one can achieve higher confidence with post-measurement information if metainformation is provided but cannot attain the conclusive rate with pre-measurement information.

\begin{figure}
  \centering
  \begin{subfigure}[b]{0.45\textwidth}
    \centering
    \includegraphics[width=\linewidth]{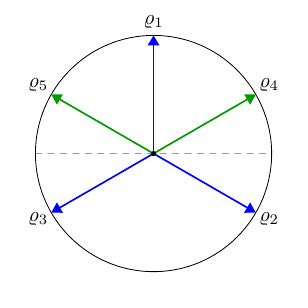}
    \caption*{ (a) Ensemble with the partition}
  \end{subfigure}
  \hspace{0.04\textwidth}
  \begin{subfigure}[b]{0.45\textwidth}
    \centering
    \includegraphics[width=\linewidth]{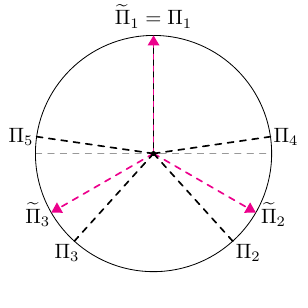}
    \caption*{ (b) MCMs with side information }
  \end{subfigure}
  \caption{ In (a), symbols $X=\{1,2,3,4,5\}$ are encoded in quantum states with partition $X_1=\{1,2,3\}$ and $X_2=\{4,5\}$. For both subsets, the optimal MCM consists of effects that are identical to the state ensemble for $X_1$, marked as blue arrows. In (b), the projections for the anticipative MCM and the standard MCM are depicted as purple arrows and black dashed lines, respectively. }
  \label{fig:mcmex3}
\end{figure}

\subsubsection{$\cpre=\cmeta\geq\cpost>C$ and $\rpre=\rmeta$ case}
The previous examples show that metainformation can either be irrelevant or offer higher confidence than post-measurement information alone. In contrast, the final example demonstrates that when metainformation is available, post-measurement information can be just as valuable as pre-measurement information.

Consider a symbol set $X=\{1,2,3,4,5\}$ with partition $X_1=\{1,2,3\}$ and $X_2=\{4,5\}$. The symbols are encoded in the following states,
\begin{align*}
&\varrho_{x}=\frac{1}{2}(\id+\sin\frac{2 \pi x}{3} \sigma_x + \cos\frac{2\pi x}{3}\sigma_z), ~x \in X_1 \, ,\\
&\varrho_4=\varrho_3^\perp \, , \quad  \varrho_5=\varrho_2^\perp \, ,
\end{align*} 
as illustrated in Fig. \ref{fig:mcmex3} (a). The projections for the standard MCM are obtained from \eqref{eq:mcmprojector},  
\bea
\Pi_1=\frac{1}{2}(I+\sigma_z), \Pi_{2,3}=\frac{1}{2}(I\pm\frac{12 \sqrt{3}}{31} \sigma_x - \frac{23}{31}\sigma_z), \Pi_{4,5}=\frac{1}{2}(I\pm \frac{4 \sqrt{3}}{7} \sigma_x + \frac{1}{7}\sigma_z) \, . \label{eq:eq3mcmproj}
\eea
These projectors yield the confidence $C(1)=\frac{1}{3}, C(2)=C(3)=\frac{11}{24}, C(4)=C(5)=\frac{3}{8}$. 

Now let us consider pre-measurement information case. The MCM projections with pre-measurement information are
\bea
\widetilde{\Pi}_1=\varrho_1, \widetilde{\Pi}_2=\widetilde{\Pi}_4=\varrho_2, \widetilde{\Pi}_3=\widetilde{\Pi}_5=\varrho_3 \, . \label{eq:ex3preproj}
\eea
One can show that the optimal MCMs with pre-measurement information are $\N_\ell (x)=\frac{2}{3}\widetilde{\Pi}_x$ for both $X_1$ and $X_2$. These measurements yield
$$
\cpre(x)=\frac{2}{3}, \forall x \in X_1,  \cpre(x)=1, \forall x \in X_2
$$
and 
$$
\rpre=\frac{4}{5} \, .
$$

Now let us find the maximum confidence by with post-measurement information. The relabelling subsets in \eqref{eq:ysubset} are
\begin{align*}
&Y_x=\{x\} \text{ if } x \in X_\ell, \\
&Y_4=\{2\}, Y_5=\{3\} \, .
\end{align*}
The MCM projections in \eqref{eq:eq3mcmproj} with this post-processing yields the confidence
$$
\cpost(1)=\frac{2}{3}, \cpost(2)=\cpost(3)=\frac{121}{186}, \cpost(4)=\cpost(5)=\frac{75}{78} \, .
$$ 
If Bob knows about the partition, he implements the projections in \eqref{eq:ex3preproj} with a single measurement, and the optimal conclusive rate is
$$
\rmeta=\frac{4}{5} \, .
$$
Therefore, metainformation can make post-measurement information as useful as pre-measurement information, 
$$
\cpost=\cmeta>\cpost , \quad  \rpre=\rmeta \, .
$$

\section{Summary}\label{sec:summary}

When transmitting information, it can be in the form of classical information, quantum information, or some of their combination.
We have considered an amalgamation where the primary message is encoded into quantum states while there is classical side information that gives partial information on the message.

We have shown that the effectiveness of classical side information depends on several factors in quantum guessing tasks that optimize different figures of merits: success probability and confidence. Specifically, even when the side information is revealed only after the measurement had been performed, two fundamentally distinct scenarios emerges. 
In one case, the receiver is aware in advance that some form of additional information would follow. 
In the other, the receiver has no prior expectation of receiving further information. 
This distinction influences both the strategies employed and the achievable performance in the guessing tasks.

The information about the form of additional information is not information in the usual sense; hence we have called it metainformation.
Our investigation demonstrates that metainformation can make a crucial difference. Since metainformation is not information in the usual sense, it may not be obvious to identify when it is present, and when not. The role of metainformation should be investigated in other information processing scenarios.

\section*{Acknowledgements}
The authors acknowledge the financial support from the Business Finland project BEQAH.

\bibliographystyle{unsrt}

\bibliography{bibliography}

\end{document}